\newif\ifdebug\debugtrue
\newif\ifreview\reviewtrue
\theoremstyle{remark}
\newtheorem*{remark}{Remark}
\newcommand{\op}[1]{\ensuremath{\operatorname{#1}}}
\def\abbr#1{\textsc{\MakeLowercase{#1}}}
\let\term\emph
\newcommand*{\bigominus}{\DOTSB\bigominus@\slimits@}
\newcommand{\bigominus@}{\mathop{\mathpalette\bigominus@@\relax}}
\newcommand{\bigominus@@}[2]{%
  \vcenter{\hbox{%
    \sbox\z@{$\m@th#1\bigoplus$}%
    \resizebox{\wd\z@}{!}{$\m@th#1\ominus$}%
  }}%
}
\NewDocumentCommand\todo{ggo}{%
  \ifdebug
    \textcolor{red}{TODO}%
    \IfValueT{#1}{\textcolor{red}{:~}}%
    \IfValueT{#1}{\textcolor{blue}{#1}}\nopagebreak%
    \IfValueT{#2}{\\\textcolor{orange}{#2}}\nopagebreak%
    \IfValueT{#3}{\\\textcolor{magenta}{#3}}%
  \fi}
\NewDocumentCommand\done{ggo}{%
  \ifdebug
    \textcolor{green}{DONE}%
    \IfValueT{#1}{\textcolor{green}{:~}}%
    \IfValueT{#1}{\textcolor{blue}{\sout{#1}}}\nopagebreak%
    \IfValueT{#2}{\\\textcolor{orange}{\sout{#2}}}\nopagebreak%
    \IfValueT{#3}{\\\textcolor{magenta}{\sout{#3}}}%
  \fi}
\newcommand*{\tran}{^{\mkern-1.5mu\mathsf{T}}}
\let\emph=\textit
\let\newcite=\textcite
\algnewcommand\algorithmicforeach{\textbf{for each}}
\title{Implementation Notes for the Soft Cosine Measure}
\author{V\'it Novotn\'y}
\affiliation{%
  \institution{Masaryk University}
  \department{Faculty of Informatics}
  \streetaddress{Botanick\'a 68a}
  \city{Brno}
  \state{Czech Republic}
  \postcode{602 00}
}
\email{witiko@mail.muni.cz}
\keywords{%
  Vector Space Model,
  computational complexity,
  similarity measure
}
\begin{document}
\ifdebug
\begin{markdown*}{renderers={
  headingOne={\section*{#1}},
  headingTwo={\subsection*{#1}},
  strongEmphasis={\textcolor{gray}{\sout{#1}}},
}}
# CIKM 2018
## TODOs
- For the review:
    - **Convert to the [ACM \LaTeX{} template](https://www.acm.org/publications/proceedings-template), use `sample-sigconf.tex` as example, aim for 4 pages**:
        - **Make sure the bibliography is properly formatted.**
        - **Fix the formatting of math in the narrow columns.**
        - **Fix the alignment of the equation in Section~\ref{sec:similarity-conclusion}.**
        - **Make sure the math fonts survived the change of the template.**
        - If we are short on space:
            - Make introduction, and conclusion more compact.
            - Omit remarks about a compact representation of Cholesky factors.
            - Omit remarks about implementation in inverted indices.
    - **Squeeze the algorithm to a single column.**
    - **Replace abstract with one that follows the background, aims, methods, results, and conclusion structure, use the one submitted to EasyChair as example.**
    - **Evaluate the speed of Cholesky decomposition compared to the algorithm proposed by [@sidorov2014soft].**
    - Mention the generalized VSM [@ir:Wongetal1985] (*S. K. M. Wong, Wojciech Ziarko, and Patrick C. N. Wong at SIGIR ACM 1985*) in the related work
    - **Replace generalized VSM with soft VSM to avoid ambiguity.**
    - Expand the future work section:
        - Discuss the use of the matrix generated by the generalized VSM.
        - Discuss the use of the matrix generated by [Explicit Semantic Analysis (ESA)](https://en.wikipedia.org/wiki/Explicit_semantic_analysis).
    - Implement suggestions from the COLING 2018 reviews:
        - My main concern with this work is the lack of any evaluation of the proposed measure and the density for a long paper.
        - Despite the claim that the Soft Cosine similarity measure allows achieving state-of-the-art results w.r.t a SemEval task, it would have been important to see a comparative measure, e.g. the similarity between embeddings obtained via the linear combination of some word embedding.
        - Even though the worst-case computation complexity is reduced from $\mathcal O(n^4)$ to $\mathcal O(n^3)$, the constant K really matters the actual computation time. If the author provides comparison of the computation time for state-of-the-arts and proposed method on a real dataset, the result would be more solid.
        - The paper talks about ``true similarity'' between documents, but I don't think there is a generally-agreed-upon notion of true document-level similarity in NLP.
        - The paper assumes that there is a limit on query length,
  which isn't necessarily true in modern-day NLP, where the document can be
  arbitrarily long.
- For the camera-ready:
    - Reference our Soft Cosine Measure implementation in Gensim:
    	- [Pull Request #1827: Implement Soft Cosine Measure](https://github.com/RaRe-Technologies/gensim/pull/1827)
        - [Pull Request #2016: Implement Levenshtein term similarity matrix and fast SCM between corpora](https://github.com/RaRe-Technologies/gensim/pull/2016)

## Links
- [Topics of Interest](http://www.cikmconference.org/#topics):
    - Optimization techniques
    - Performance evaluation
    - Information storage and retrieval and interface technology
    - Digital libraries
    - Multimedia databases
- [Call for (Full and) Short Research Paper](http://www.cikm2018.units.it/callforpaper.html)
\end{markdown*}
\newpage
\fi

\begin{abstract}
The standard bag-of-words vector space model (\abbr{VSM}) is efficient, and
ubiquitous in information retrieval, but it underestimates the
similarity of documents with the same meaning, but different terminology. To
overcome this limitation, \newcite{sidorov2014soft} proposed the Soft Cosine
Measure (\abbr{SCM}) that incorporates term similarity relations.
\newcite{charletdamnati17} showed that the \abbr{SCM} is highly effective in
question answering (\abbr{QA}) systems. However, the orthonormalization
algorithm proposed by \newcite{sidorov2014soft} has an impractical time
complexity of $\mathcal O(n^4)$, where $n$ is the size of the vocabulary.
\looseness=-1

In this paper, we prove a tighter lower worst-case time complexity bound of
$\mathcal O(n^3)$. We also present an algorithm for computing the similarity
between documents and we show that its worst-case time complexity is $\mathcal
O(1)$ given realistic conditions. Lastly, we describe implementation in
general-purpose vector databases such~as Annoy, and Faiss and in the inverted
indices of text search engines such~as Apache Lucene, and ElasticSearch.
Our results enable the deployment of the \abbr{SCM} in real-world information
retrieval systems.
\end{abstract}

\maketitle

\section{Introduction}
\label{sec:similarity-introduction}
The standard bag-of-words vector space model
(\abbr{VSM})~\cite{ml:SaltonBuckley1988}\index{standard model}
represents documents as real vectors.
Documents are expressed in a basis where each basis vector corresponds to a
single term, and each coordinate corresponds to the frequency of a term in a
document. Consider the documents
\begin{align*}
  d_1 &=\text{“When Antony found \textbf{Julius Caesar} dead”, and} \\
  d_2 &=\text{“I did enact \textbf{Julius Caesar}: I was killed i' the Capitol”}
\end{align*}
represented in a basis $\{\bm\alpha_i\}_{i=1}^{14}$\index{.a@$\bm\alpha$|emph} of
$\mathbb R^{14}$, where the basis vectors correspond
to the terms in the order of first appearance. Then the corresponding document vectors
$\mathbf v_1$, and $\mathbf v_2$ would have the following coordinates in $\bm\alpha$:
{\linepenalty=500
\begin{align*}
  (\mathbf v_1)_{\bm\alpha} &= [1\:1\:1\:\textbf1\:\textbf1\:1\:0\:0\:0\:0\:0\:0\:0\:0]\tran, \text{and} \\
  (\mathbf v_2)_{\bm\alpha} &= [0\:0\:0\:\textbf1\:\textbf1\:0\:2\:1\:1\:1\:1\:1\:1\:1]\tran.
\end{align*}}%
Assuming $\bm\alpha$ is orthonormal, we can take the inner product of the
$\ell^2$-normalized vectors $\mathbf v_1$, and $\mathbf v_2$ to measure the
cosine of the angle
(i.e.\ the \term{cosine similarity}) between the documents $d_1$, and~$d_2$:
\begin{equation*}
  \langle\mathbf v_1/\Vert \mathbf v_1\Vert, \mathbf v_2/\Vert\mathbf
  v_2\Vert\rangle = \frac{\bigl((\mathbf v_1)_{\bm\alpha}\bigr)\tran (\mathbf v_2)_{\bm\alpha}}{\sqrt{\bigl((\mathbf v_1)_{\bm\alpha}\bigr)\tran (\mathbf
  v_1)_{\bm\alpha}}\sqrt{\bigl((\mathbf v_2)_{\bm\alpha}\bigr)\tran (\mathbf
  v_2)_{\bm\alpha}}}\approx0.23.
\end{equation*}
Intuitively, this underestimates the true similarity between $d_1$, and~$d_2$.
Assuming $\bm\alpha$ is orthogonal but
not orthonormal, and that the terms Julius, and Caesar are twice as
important as the other terms, we can construct a diagonal change-of-basis
matrix $\mathbf W = (w_{ij})$ from $\bm\alpha$ to an
orthonormal basis $\bm\beta$, where $w_{ii}$ corresponds
to the importance of a term~$i$. This brings us closer to the true similarity:
\begin{eqnarray*}
  (\mathbf v_1)_{\bm\beta} &\!=\!& \mathbf W (\mathbf v_1)_{\bm\alpha} = [1\:1\:1\:\textbf2\:\textbf2\:1\:0\:0\:0\:0\:0\:0\:0\:0]\tran, \\
  (\mathbf v_2)_{\bm\beta} &\!=\!& \mathbf W (\mathbf v_2)_{\bm\alpha} = [0\:0\:0\:\textbf2\:\textbf2\:0\:2\:1\:1\:1\:1\:1\:1\:1]\tran, \text{and}
\end{eqnarray*}
\begin{multline*}
 \langle\mathbf v_1/\Vert \mathbf v_1\Vert, \mathbf v_2/\Vert\mathbf v_2\Vert\rangle\\
   = 
  \frac{\bigl(\mathbf W(\mathbf v_1)_{\bm\alpha}\bigr)\tran \mathbf W(\mathbf v_2)_{\bm\alpha}}
  {\sqrt{\bigl(\mathbf W(\mathbf v_1)_{\bm\alpha}\bigr)\tran \mathbf W(\mathbf
    v_1)_{\bm\alpha}}\sqrt{\bigl(\mathbf W(\mathbf v_2)_{\bm\alpha}\bigr)\tran \mathbf W(\mathbf
    v_2)_{\bm\alpha}}}\approx0.53.
\end{multline*}
Since we assume that the bases $\bm\alpha$ and $\bm\beta$ are orthogonal, the
terms dead and
killed contribute nothing to the cosine similarity despite the clear synonymy,
because $\langle \bm\beta_{\text{dead}}, \bm\beta_{\text{killed}}\rangle=0$.
In general, the \abbr{VSM} will underestimate the true similarity between
documents that carry the same meaning but use different terminology.

In this paper, we further develop the soft \abbr{VSM} described by
\newcite{sidorov2014soft}, which does not assume $\bm\alpha$ is orthogonal
and which achieved state-of-the-art results on the question
answering~(\abbr{QA}) task at SemEval~2017~\cite{charletdamnati17}.  In
Section~\ref{sec:similarity-relwork}, we review the previous work incorporating
term similarity into the \abbr{VSM}. In
Section~\ref{sec:similarity-generalized-vsm}, we restate the
definition of the soft \abbr{VSM} and present several computational
complexity results. In Section~\ref{sec:similarity-implementation}, we describe
the implementation in vector databases and
inverted indices. We conclude in Section~\ref{sec:similarity-conclusion} by
summarizing our results and suggesting future work.\looseness=-1

\section{Related work}
\label{sec:similarity-relwork}
Most works incorporating term similarity into the \abbr{VSM} published prior to
\newcite{sidorov2014soft} remain in an
orthogonal coordinate system and instead propose novel document similarity
measures. To name a few, \newcite{mikawa2011proposal} proposes the
\term{extended cosine measure}, which introduces a metric matrix~$\mathbf Q$ as a
multiplicative factor in the cosine similarity formula. $\mathbf Q$ is the
solution of an optimization problem to maximize the sum of extended cosine
measures between each vector and the centroid of the vector's category.
Conveniently, the metric matrix~$\mathbf Q$ can be used directly with the soft
\abbr{VSM}, where
it defines the inner product between basis vectors. \newcite{jimenez2012soft}
equip the multiset \abbr{VSM} with a \term{soft cardinality} operator that
corresponds to cardinality, but takes term similarities into account.

The notion of generalizing the \abbr{VSM} to non-orthogonal coordinate systems was
perhaps first explored by \newcite{sidorov2014soft} in the context of entrance
exam question answering, where the basis vectors did not correspond directly to
terms, but to $n$-grams constructed by following paths in syntactic
%
trees. The authors derive the inner product of two basis vectors from the edit
distance\index{edit distance} between the corresponding $n$-grams.
\term{Soft cosine measure} (\abbr{SCM}) is how they term the formula for computing the
cosine similarity between two vectors expressed in a non-orthogonal basis.
They also present an algorithm that computes a change-of-basis matrix to an
orthonormal basis in time $\mathcal{O}(n^4)$. We present an $\mathcal{O}(n^3)$
algorithm in this~paper.

\newcite{charletdamnati17} achieved state-of-the-art results at the \abbr{QA}
task at SemEval~2017~\cite{nakov2017semeval} by training a
document classifier on soft cosine measures between document passages. Unlike
\newcite{sidorov2014soft}, \newcite{charletdamnati17} already use basis vectors
that correspond to terms rather than to $n$-grams. They derive the inner
product of two basis vectors both from the edit distance between the
corresponding terms, and from the inner product of the corresponding
word2vec term embeddings~\cite{mikolov2013efficient}.

\section{Computational complexity}
\label{sec:similarity-generalized-vsm}
In this section, we restate the definition of the soft \abbr{VSM}
as it was described by \newcite{sidorov2014soft}. We then prove a tighter lower
worst-case time complexity bound for computing a change-of-basis matrix to an
orthonormal basis. We also prove that under certain assumptions, the inner
product is a linear-time operation.

\begin{definition}
Let $\mathbb R^n$ be the real $n$-space over $\mathbb R$
equipped with the bilinear inner product $\langle\cdot,\cdot\rangle$. Let
$\{\bm{\alpha}_i\}_{i = 1}^n$ be the basis
of $\mathbb R^n$ in which we express our vectors. Let $\mathbf
W_{\bm{\alpha}}=(w_{ij})$ be a diagonal change-of-basis matrix from
$\bm\alpha$ to a normalized basis $\{\bm{\beta}_i\}_{i =
1}^n$ of $\mathbb R^n$, i.e.\ $\langle
\bm{\beta}_i, \bm{\beta}_j\rangle\in[-1, 1], \langle \bm{\beta}_i,
\bm{\beta}_i\rangle=1$. Let $\mathbf S_{\bm{\beta}}=(s_{ij})$\index{.s@$\mathbf
S$|emph} be the metric matrix of $\mathbb R^n$ w.r.t.\ $\bm{\beta}$,
i.e.\ $s_{ij} = \langle \bm{\beta}_i, \bm{\beta}_j\rangle$. Then $(\mathbb
R^n, \mathbf W_{\bm{\alpha}}, \mathbf S_{\mathbf\beta})$ is a
\textit{soft~\abbr{VSM}}.\looseness=-1
\end{definition}

\begin{theorem}
\label{thm:similarity-orthonormalization}
Let\,\,$G=(\mathbb R^n, \mathbf W_{\bm{\alpha}}, \mathbf S_{\bm{\beta}})$ be a soft
\abbr{VSM}. Then a change-of-basis matrix $\mathbf E$ from the basis $\bm\beta$ to an orthonormal
basis of\ \,$\mathbb R^n$ can be computed in time $\mathcal O(n^3)$.
\end{theorem}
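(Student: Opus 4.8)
The plan is to reduce the construction of $\mathbf E$ to a single factorization of the metric matrix $\mathbf S_{\bm\beta}$, for which a cubic-time algorithm is classical. First I would record the two structural facts about $\mathbf S_{\bm\beta}=(s_{ij})$ that make this work. It is symmetric, since $s_{ij}=\langle\bm\beta_i,\bm\beta_j\rangle=\langle\bm\beta_j,\bm\beta_i\rangle=s_{ji}$ by symmetry of the inner product. It is also positive definite: for any nonzero coordinate vector $\mathbf x$, the vector $\mathbf v=\sum_i x_i\bm\beta_i$ is nonzero because $\bm\beta$ is a basis, and $\mathbf x\tran\mathbf S_{\bm\beta}\mathbf x=\langle\mathbf v,\mathbf v\rangle>0$ by positive-definiteness of the inner product. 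Thus $\mathbf S_{\bm\beta}$ is symmetric positive definite.

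Next I would reformulate orthonormalization as a factorization. Following the convention used earlier for $\mathbf W_{\bm\alpha}$, a change-of-basis matrix acts on coordinates, so if $\mathbf x$ are the coordinates of a vector in $\bm\beta$ then $\mathbf E\mathbf x$ are its coordinates in the target basis. The inner product of two vectors is $\mathbf x_1\tran\mathbf S_{\bm\beta}\mathbf x_2$ in $\bm\beta$, and requiring the target basis to be orthonormal means this must equal $(\mathbf E\mathbf x_1)\tran(\mathbf E\mathbf x_2)=\mathbf x_1\tran\mathbf E\tran\mathbf E\mathbf x_2$ for all $\mathbf x_1,\mathbf x_2$. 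Hence $\mathbf E$ orthonormalizes the basis exactly when $\mathbf E\tran\mathbf E=\mathbf S_{\bm\beta}$, so computing $\mathbf E$ is nothing more than producing one factor of $\mathbf S_{\bm\beta}$ of this shape.

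The concrete algorithm is then the Cholesky decomposition. Because $\mathbf S_{\bm\beta}$ is symmetric positive definite, it admits a factorization $\mathbf S_{\bm\beta}=\mathbf L\mathbf L\tran$ with $\mathbf L$ lower triangular, computable in $\mathcal O(n^3)$ time by the standard recurrence. Taking $\mathbf E=\mathbf L\tran$ then gives $\mathbf E\tran\mathbf E=\mathbf L\mathbf L\tran=\mathbf S_{\bm\beta}$, which is precisely the orthonormalization condition from the previous paragraph. Forming the transpose costs only $\mathcal O(n^2)$, so the Cholesky step dominates and the whole procedure runs in $\mathcal O(n^3)$.

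I expect the main obstacle to be conceptual rather than computational: recognizing that the orthonormalization demanded by the theorem is merely a factorization of the Gram matrix, and that Cholesky delivers it in cubic time, whereas a naive Gram--Schmidt sweep that recomputes inner products at $\mathcal O(n^2)$ cost per step incurs the $\mathcal O(n^4)$ of the earlier algorithm. A secondary point to handle cleanly is the positive-definiteness hypothesis, on which the existence of $\mathbf L$ rests; I would flag explicitly that the bound relies on $\langle\cdot,\cdot\rangle$ being a genuine (positive-definite) inner product, and note that when this fails one falls back on a symmetric $\mathbf L\mathbf D\mathbf L\tran$ factorization at the same asymptotic cost.
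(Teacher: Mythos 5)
Your proposal is correct and takes essentially the same approach as the paper's proof: establish that $\mathbf S_{\bm\beta}$ is symmetric positive definite because it is the Gram matrix of the linearly independent basis vectors $\bm\beta$, then obtain $\mathbf E$ from a single Cholesky factorization computable in time $\mathcal O(n^3)$. The only difference is a transposition convention (you derive $\mathbf E\tran\mathbf E=\mathbf S_{\bm\beta}$ where the paper writes $\mathbf S=\mathbf E\mathbf E\tran$), which is immaterial since either triangular factor of the Cholesky decomposition serves as the required change-of-basis matrix.
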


\begin{proof}
By definition, $\mathbf S=\mathbf E\mathbf E\tran$ for any change-of-basis
matrix $\mathbf E$ from the basis $\bm\beta$ to an orthonormal basis. Since
$\mathbf S$ contains inner products of linearly independent vectors
$\bm{\beta}$, it is Gramian and positive definite
\cite[p.~441]{horn2013matrix}. The Gramianness of $\mathbf S$ also implies
its symmetry. Therefore, a lower triangular~$\mathbf E$ is uniquely
determined by the Cholesky factorization of the symmetric positive-definite~$\mathbf S$, which we can compute in time
$\mathcal{O}(n^3)$~\cite[p.~191]{stewart1998matrix}.
\end{proof}

\begin{remark}
See Table~\ref{tab:speed-evaluation} for an experimental comparison.\looseness=-1

Although the vocabulary in our introductory example contains only $n=14$ terms,
$n$ is in the millions for real-world corpora such as the English Wikipedia.
Therefore, we generally need to store the $n\times n$ matrix $\mathbf S$
in a sparse format, so that it fits into main memory. Later, we will discuss
how the density of $\mathbf S$ can be reduced, but the Cholesky factor
$\mathbf E$ can also be arbitrarily dense and therefore expensive to store.
Given a permutation matrix $\mathbf P$, we can instead
factorize $\mathbf{P\tran SP}$ into $\mathbf{FF\tran}$. Finding the permutation
matrix~$\mathbf P$ that minimizes the density of the
Cholesky factor $\mathbf F$ is \textsf{NP}-hard
\cite{yannakakis1981computing}, but heuristic stategies are known
\cite{cuthill1969reducing,heggernes2001computational}. Using the fact
that $\mathbf P\tran=\mathbf P^{-1}$, and basic facts about transpose,
we can derive $\mathbf{E}=\mathbf{PF}$ as follows:
$
  \mathbf S = \mathbf{PP\tran SPP\tran} = \mathbf{PFF\tran P\tran} = \mathbf{PF(PF)\tran} = \mathbf{EE\tran}.
$
\end{remark}

\begin{table}[tb]
\centering
\caption{\textmd{The real time to compute a matrix $\mathbf E$ from a
  dense matrix~$\mathbf{S}$ averaged over 100 iterations. We used two Intel
  Xeon E5-2650 v2 (20M cache, 2.60~GHz) processors to evaluate the $\mathcal
  O(n^3)$ Cholesky factorization from NumPy 1.14.3, and the $\mathcal O(n^4)$
  iterated Gaussian elimination from \abbr{LAPACK}. For $n>1000$, only sparse
  $\mathbf{S}$ seem practical.}\looseness=-1}
\label{tab:speed-evaluation}
\begin{tabular}{rlr@{.}lr@{.}l}
\textbf{$n$ terms} & \textbf{Algorithm} & \multicolumn{2}{l}{\textbf{Real computation time}} \\ \midrule
 100 & Cholesky factorization &   0&0006\,sec (0.606\,ms)   \\
 100 & Gaussian elimination   &   0&0529\,sec (52.893\,ms)  \\
 500 & Cholesky factorization &   0&0086\,sec (8.640\,ms)   \\
 500 & Gaussian elimination   &  22&7361\,sec (22.736\,sec) \\
1000 & Cholesky factorization &   0&0304\,sec (30.378\,ms)  \\
1000 & Gaussian elimination   & 354&2746\,sec (5.905\,min)  \\
\end{tabular}
\end{table}

\begin{lemma}
\label{lemma:similarity-inner-product}
Let $G=(\mathbb R^n, \mathbf W_{\bm{\alpha}}, \mathbf S_{\bm{\beta}})$ be a
soft \abbr{VSM}. Let $\mathbf x,\mathbf y\in\mathbb R^n$. Then $\langle
\mathbf x,\mathbf y\rangle=(\mathbf W(\mathbf x)_{\bm{\alpha}})\tran\mathbf S\mathbf
W(\mathbf y)_{\bm{\alpha}}$.
\end{lemma}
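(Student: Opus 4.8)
The plan is to reduce everything to the definition of the metric matrix together with the standard coordinate representation of a bilinear form, and then substitute the change of basis at the very end.

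First I would express both vectors in the normalized basis $\bm\beta$. Writing $\mathbf x=\sum_i a_i\bm\beta_i$ and $\mathbf y=\sum_j b_j\bm\beta_j$, where $a=(\mathbf x)_{\bm\beta}$ and $b=(\mathbf y)_{\bm\beta}$ are the respective coordinate vectors, bilinearity of $\langle\cdot,\cdot\rangle$ gives $\langle\mathbf x,\mathbf y\rangle=\sum_{i,j}a_i b_j\langle\bm\beta_i,\bm\beta_j\rangle$. By the definition $s_{ij}=\langle\bm\beta_i,\bm\beta_j\rangle$ of the metric matrix, the double sum collapses into the quadratic-form expression $a\tran\mathbf S b=((\mathbf x)_{\bm\beta})\tran\mathbf S\,(\mathbf y)_{\bm\beta}$. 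This is just the familiar fact that a bilinear form is represented in any basis by its Gram matrix, so no additional machinery is needed here.

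Second, I would invoke the defining property of the change-of-basis matrix $\mathbf W$: since $\mathbf W$ sends $\bm\alpha$-coordinates to $\bm\beta$-coordinates—exactly as in the worked example of the introduction, where $(\mathbf v_k)_{\bm\beta}=\mathbf W(\mathbf v_k)_{\bm\alpha}$—we have the two identities $(\mathbf x)_{\bm\beta}=\mathbf W(\mathbf x)_{\bm\alpha}$ and $(\mathbf y)_{\bm\beta}=\mathbf W(\mathbf y)_{\bm\alpha}$. Substituting both into the expression obtained in the first step yields $\langle\mathbf x,\mathbf y\rangle=(\mathbf W(\mathbf x)_{\bm\alpha})\tran\mathbf S\,\mathbf W(\mathbf y)_{\bm\alpha}$, which is precisely the claim.

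The only point requiring genuine care—and hence the main obstacle—is fixing the orientation of $\mathbf W$ consistently, so that no inverse or extra transpose of $\mathbf W$ accidentally sneaks in: one must confirm that $\mathbf W$ really maps $\bm\alpha$- to $\bm\beta$-coordinates rather than the reverse. Once this convention is pinned down (and the definition together with the introductory example already pins it down), the remaining manipulations are pure bilinear bookkeeping and the identity follows immediately.
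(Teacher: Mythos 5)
Your proof is correct, and it takes a genuinely more direct route than the paper's. The paper proves the lemma by passing through an orthonormal basis: it introduces the change-of-basis matrix $\mathbf E$ from $\bm\beta$ to an orthonormal basis $\bm\gamma$, writes $\langle\mathbf x,\mathbf y\rangle=\bigl(\mathbf{EW(x)}_{\bm\alpha}\bigr)\tran\mathbf{EW(y)}_{\bm\alpha}$, expands both factors into sums over basis vectors, and only then collapses the pairwise inner products into the entries $s_{ij}$. You skip the detour entirely: bilinearity plus the definition of $\mathbf S$ as the Gram matrix of $\bm\beta$ gives $\langle\mathbf x,\mathbf y\rangle=\bigl((\mathbf x)_{\bm\beta}\bigr)\tran\mathbf S(\mathbf y)_{\bm\beta}$ at once, and substituting $(\mathbf x)_{\bm\beta}=\mathbf W(\mathbf x)_{\bm\alpha}$ (whose orientation you rightly flag as the one point needing care, and which the definition together with the introductory example does fix) finishes the proof. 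Your route is the more economical and slightly more general one: it never needs an orthonormal basis to exist, hence never leans---even implicitly---on the positive definiteness of $\mathbf S$, which is relevant because the paper's subsequent remark wants the same formula to make sense for non-metric matrices $\mathbf S$; the paper's route, by contrast, stays consistent with the $\mathbf E$-and-$\mathbf W$ machinery used in Theorem~\ref{thm:similarity-orthonormalization} and in the vector-database section. A small bonus of your version: it sidesteps a notational slip in the paper's middle steps, which write $(\bm\alpha_i)_{\bm\gamma}$ and $\langle\bm\alpha_i,\bm\alpha_j\rangle$ where $(\bm\beta_i)_{\bm\gamma}$ and $\langle\bm\beta_i,\bm\beta_j\rangle=s_{ij}$ are meant, the weights $w_{ii}$ having already been factored out of the sums.
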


\begin{proof}
Let $\mathbf E$ be the change-of-basis matrix from the basis $\bm{\beta}$ to an
orthonormal basis $\bm{\gamma}$ of $\mathbb R^n$. Then:
\begin{multline*}
  \label{eq:similarity-inner-product}
  \langle\mathbf x, \mathbf y\rangle
  = \bigl((\mathbf{x})_{\bm\gamma}\bigr)\tran (\mathbf{y})_{\bm\gamma}
  = \bigl(\mathbf{E(x)}_{\bm\beta}\bigr)\tran \mathbf{E(y)}_{\bm\beta}
  = \bigl(\mathbf{EW(x)}_{\bm\alpha}\bigr)\tran \mathbf{EW(y)}_{\bm\alpha} \\
  = \bigg(\sum_{i=1}^n (\bm{\alpha}_i)_{\bm\gamma}\cdot w_{ii}\cdot(x_i)_{\bm\alpha}\bigg)\cdot\bigg(\sum_{j=1}^n(\bm{\alpha}_j)_{\bm\gamma}\cdot w_{jj}\cdot (y_j)_{\bm\alpha}\bigg)
  \\[-0.5ex]
  = \sum_{i=1}^n\sum_{j=1}^nw_{ii}\cdot(x_i)_{\bm\alpha}\cdot\langle\bm\alpha_i, \bm\alpha_j\rangle\cdot w_{jj}\cdot (y_j)_{\bm\alpha} \\[-0.5ex]
  = \sum_{i=1}^n\sum_{j=1}^nw_{ii}\cdot(x_i)_{\bm\alpha}\cdot s_{ij}\cdot w_{jj}\cdot (y_j)_{\bm\alpha}
  = \bigl(\mathbf{W(x)}_{\bm\alpha}\bigr)\tran \mathbf S\mathbf{W(y)}_{\bm\alpha}.\!\!\qedhere
\end{multline*}
\end{proof}

\begin{remark}
From here, we can directly derive the cosine of the angle between $\mathbf x$
and $\mathbf y$ (i.e.\ what \newcite{sidorov2014soft} call the \abbr{SCM}) as
follows:\looseness=-1
\begin{equation*}
  \langle\mathbf x/\Vert\mathbf x\Vert, \mathbf y/\Vert\mathbf y\Vert\rangle
  = \frac{\bigl(\mathbf{W(x)}_{\bm\alpha}\bigr)\tran\mathbf S\mathbf{W(y)}_{\bm\alpha}}{\sqrt{\bigl(\mathbf{W(x)}_{\bm\alpha}\bigr)\tran\mathbf S\mathbf{W(x)}_{\bm\alpha}}
 \sqrt{\bigl(\mathbf{W(y)}_{\bm\alpha}\bigr)\tran\mathbf S\mathbf{W(y)}_{\bm\alpha}}}.
\end{equation*}
The \abbr{SCM} is actually the starting point for
\newcite{charletdamnati17}, who propose matrices $\mathbf S$ that are not
necessarily metric. If, like them, we are only interested in computing the
\abbr{SCM}, then we only require that the square roots remain real, i.e.\
that $\mathbf x\not=0\implies(\mathbf{W(x)}_{\bm\alpha})\tran\mathbf
S\mathbf{W(x)}_{\bm\alpha}\geq 0$.  For arbitrary $\mathbf x\in\mathbb{R}^n$,
this holds iff $\mathbf S$ is positive semi-definite. However, since the
coordinates $(\mathbf x)_{\bm\alpha}$ correspond to non-negative term frequencies,
it is sufficient that $\mathbf W$ and $\mathbf S$ are
non-negative as well. If we are only interested in computing the inner product,
then $\mathbf S$ can~be~arbitrary.
\end{remark}

\begin{theorem}
\label{thm:similarity-inner-product}
Let $G=(\mathbb R^n, \mathbf W_{\bm{\alpha}}, \mathbf S_{\bm{\beta}})$ be a
soft \abbr{VSM} such that no column of\ \,$\mathbf S$ contains more than $C$
non-zero elements, where $C$ is a constant. Let $\mathbf x,\mathbf y\in\mathbb
R^n$ and let $m$ be the number of non-zero elements in $(\mathbf x)_{\bm\beta}$.
Then $\langle\mathbf x, \mathbf y\rangle$ can be computed in time $\mathcal O(m)$.
\end{theorem}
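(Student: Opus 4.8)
The plan is to reduce to the closed form from Lemma~\ref{lemma:similarity-inner-product} and then exploit sparsity at both ends of the resulting bilinear form. First I would apply the lemma to write $\langle\mathbf x,\mathbf y\rangle=\mathbf u\tran\mathbf S\mathbf w$, where I abbreviate $\mathbf u=\mathbf W(\mathbf x)_{\bm\alpha}$ and $\mathbf w=\mathbf W(\mathbf y)_{\bm\alpha}$. Since $\mathbf W$ is the change-of-basis matrix from $\bm\alpha$ to $\bm\beta$, we have $\mathbf u=(\mathbf x)_{\bm\beta}$, so $\mathbf u$ has exactly $m$ non-zero coordinates. This is the quantity against which the running time must be charged, so the argument must avoid any work that scales with $n$ or with the number of non-zeros of $\mathbf w$.

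Next I would expand the bilinear form as a double sum and restrict each index range to the positions that actually contribute:
\begin{equation*}
  \mathbf u\tran\mathbf S\mathbf w=\sum_{i\,:\,u_i\neq 0}\ \sum_{j\,:\,s_{ij}\neq 0}u_i\,s_{ij}\,w_j.
\end{equation*}
The outer sum ranges over exactly $m$ indices. The key step is to bound the inner sum: since $\mathbf S$ is the metric matrix of $\bm\beta$ it is symmetric (as already used in Theorem~\ref{thm:similarity-orthonormalization}), so row $i$ has the same support as column $i$, and by hypothesis that support has at most $C$ entries. Hence the whole expression is a sum of at most $mC$ products, each a constant-time multiply-accumulate, giving $\mathcal O(mC)=\mathcal O(m)$ because $C$ is a constant.

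The main obstacle is not the arithmetic but the data-structure bookkeeping that makes each of the $\mathcal O(mC)$ steps genuinely $\mathcal O(1)$. Concretely, I must be able to (i)~enumerate the non-zero entries of $\mathbf u$, (ii)~for each such $i$ enumerate the at most $C$ non-zero entries $s_{ij}$ of row $i$ in $\mathcal O(C)$ time, and (iii)~look up the coordinate $w_j$ in $\mathcal O(1)$ time. Step~(ii) relies on storing $\mathbf S$ in a row-addressable sparse format, which is exactly where the symmetry of $\mathbf S$ is used to convert the column bound $C$ into a row bound. Step~(iii) relies on random access to the coordinates of $\mathbf w$; I would note that in the RAM model (or with a hash-indexed sparse representation of $\mathbf w$) this is $\mathcal O(1)$, so the per-term cost is constant and the overall bound $\mathcal O(m)$ follows.
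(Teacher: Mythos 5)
Your proposal is correct and follows essentially the same route as the paper: it invokes Lemma~\ref{lemma:similarity-inner-product} to reduce $\langle\mathbf x,\mathbf y\rangle$ to the bilinear form $\bigl(\mathbf W(\mathbf x)_{\bm\alpha}\bigr)\tran\mathbf S\mathbf W(\mathbf y)_{\bm\alpha}$ and then evaluates it with a double loop over the $m$ non-zeros of the query vector and the at most $C$ non-zeros per row/column of $\mathbf S$, exactly as in the paper's Algorithm~\ref{algo:similarity-inner-product}. If anything, you are slightly more careful than the paper in spelling out the supporting details it leaves implicit, namely that symmetry of $\mathbf S$ converts the column bound into a row bound, and that $\mathcal O(1)$ random access to the coordinates of $\mathbf y$ is needed for each multiply-accumulate.
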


\begin{proof}
Assume that $(\mathbf x)_{\bm{\alpha}}, (\mathbf y)_{\bm{\alpha}},$
and $\mathbf S$ are represented by data structures with constant-time
column access and non-zero element traversal, e.g.\ compressed sparse column
(\abbr{CSC}) matrices.
Further assume that $\mathbf W$ is represented by an array containing the main
diagonal of $\mathbf W$. Then
Algorithm~\ref{algo:similarity-inner-product} computes $\bigl(\mathbf W(\mathbf
x)_{\bm{\alpha}}\bigr)\tran\mathbf S\mathbf W(\mathbf y)_{\bm{\alpha}}$ in time
$\mathcal{O}(m)$, which by Lemma~\ref{lemma:similarity-inner-product},
corresponds to $\langle\mathbf x, \mathbf y\rangle$.\qedhere

\begin{algorithm}
\caption{The inner product of $\mathbf x$ and $\mathbf y$}
\label{algo:similarity-inner-product}
\begin{algorithmic}[1]
\State $r\gets 0$
\ForEach {$i$ such that $(x_i)_{\bm\alpha}$ is non-zero}
\Comment{$= m$ iterations}
\ForEach {$j$ such that $s_{ij}$ is non-zero}
\Comment{$\leq C$ iterations}
\State $r\gets r+w_{ii}\cdot(x_i)_{\bm\alpha}\cdot s_{ij}\cdot w_{jj}\cdot (y_j)_{\bm\alpha}$
\EndFor
\EndFor
\vspace*{-1.09ex}
\State \textbf{return} $r$
\end{algorithmic}
\end{algorithm}
\vspace*{-2ex}
\end{proof}
\vspace*{-2ex}

\begin{remark}
Similarly, we can show that if a column of $\mathbf S$ contains $C$ non-zero
elements on average, $\langle\mathbf x, \mathbf y\rangle$ has the average-case
time complexity of $\mathcal O(m)$. Note also that most information
retrieval systems impose a limit on the length of a query document. Therefore,
$m$ is usually bounded by a constant and $\mathcal O(m)=\mathcal O(1)$.

Since we are usually interested in the inner products of all document pairs
in two corpora (e.g. one containing queries and the other actual documents),
we can achieve significant speed improvements with vector processors by
computing $(\mathbf{WX})\tran\mathbf{SWY}$, where $\mathbf{X}$, and
$\mathbf{Y}$ are \emph{corpus matrices} containing the coordinates of document
vectors in the basis $\bm{\alpha}$ as columns. To compute the \abbr{SCM},
we first need to normalize the document vectors by performing an
entrywise division of every column in $\mathbf{X}$ by
$\op{diag}\sqrt{(\mathbf{WX})\tran\mathbf{SWX}}=\sqrt{(\mathbf{WX})\tran\mathbf{S}\circ(\mathbf{WX})\tran},$
where $\circ$ denotes entrywise product. $\mathbf{Y}$ is normalized
analogously.

There are several strategies for making no column of $\mathbf S$ contain more
than $C$ non-zero elements.
If we do not require that $\mathbf S$ is metric (e.g.\ because we only wish to
compute the inner product, or the \abbr{SCM}), a simple strategy is to start with an empty
matrix, and to insert the $C-1$ largest elements and the diagonal element from
every column of $\mathbf S$. However, the resulting matrix will likely be
asymmetric, which makes the inner product formula asymmetric as well.
We can regain symmetry by always inserting an element $s_{ij}$
together with the element $s_{ji}$ and only if this does not make the column
$j$ contain more than $C$ non-zero elements. This strategy is greedy, since
later columns contain non-zero elements inserted by earlier columns. Our
preliminary experiments suggest that processing colums that correspond to
increasingly frequent terms performs best on the task of
\newcite{charletdamnati17}.
Finally, by limiting the sum of all non-diagonal elements in a column to be
less than one, we can make $\mathbf S$ strictly diagonally dominant and
therefore positive definite, which enables us to compute $\mathbf E$ through
Cholesky factorization.
\end{remark}

\section{Implementation in vector databases and inverted indices}
\label{sec:similarity-implementation}
\looseness=-1
In this section, we present coordinate transformations for retrieving nearest
document vectors according to the inner product, and the soft cosine measure
from general-purpose vector databases such as Annoy, or Faiss~\cite{JDH17}. We
also discuss the implementation in the inverted indices of text search engines
such as Apache Lucene~\cite{bialecki12}.

\begin{remark}
With a vector database, we can transform document vectors to an orthonormal
basis $\bm{\gamma}$. In the transformed coordinates, the dot product
$((\mathbf{x})_{\bm\gamma})\tran (\mathbf{y})_{\bm\gamma}$ corresponds to the
inner product $\langle\mathbf{x}, \mathbf{y}\rangle$ and the cosine similarity
corresponds to the cosine of an angle $\langle\mathbf x/\Vert\mathbf x\Vert,
\mathbf y/\Vert\mathbf y\Vert\rangle$ (i.e. the soft cosine measure). A vector
database that supports nearest neighbor search according to either the dot
product, or the cosine similarity will therefore retrieve vectors expressed in
$\bm{\gamma}$ according to either the inner product, or the soft cosine
measure. 
We can compute a change-of-basis matrix~$\mathbf E$ of order $n$ in time
$\mathcal{O}(n^3)$ by Theorem~\ref{thm:similarity-orthonormalization} and use it to
transform every vector $\mathbf x\in\mathbb R^n$ to $\bm\gamma$ by computing
$\mathbf{EW(x)_{\bm\alpha}}$. However, this approach requires that $\mathbf S$
is symmetric positive-definite and that we recompute $\mathbf E$, and reindex the
vector database each time $\mathbf S$ has changed. We will now discuss
transformations that do not require $\mathbf E$ and for which a non-negative
$\mathbf S$ is sufficient as discussed in the remark for
Lemma~\ref{lemma:similarity-inner-product}.\looseness=-1
\end{remark}

\begin{theorem}
Let $G=(\mathbb R^n, \mathbf W_{\bm{\alpha}}, \mathbf S_{\bm{\beta}})$ be a
soft \abbr{VSM}. Let $\mathbf{x,x',y}\in\mathbb R^n$ such
that $(\mathbf{x}')_{\bm{\beta}} =
\mathbf{S}\tran(\mathbf{x})_{\bm{\beta}}.$ Then $\langle \mathbf x, \mathbf
y\rangle=((\mathbf{x}')_{\bm{\beta}})\tran(\mathbf{y})_{\bm{\beta}}.$
\end{theorem}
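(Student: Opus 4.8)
The plan is to reduce the claim to Lemma~\ref{lemma:similarity-inner-product} by expressing everything in the normalized basis $\bm\beta$. First I would observe that, since $\mathbf W$ is by definition the change-of-basis matrix from $\bm\alpha$ to $\bm\beta$, the coordinates of any vector satisfy $(\mathbf x)_{\bm\beta}=\mathbf W(\mathbf x)_{\bm\alpha}$ and likewise $(\mathbf y)_{\bm\beta}=\mathbf W(\mathbf y)_{\bm\alpha}$; this is precisely the substitution already made in the proof of that lemma. Rewriting Lemma~\ref{lemma:similarity-inner-product} in these coordinates gives
\begin{equation*}
  \langle\mathbf x,\mathbf y\rangle
  = \bigl(\mathbf W(\mathbf x)_{\bm\alpha}\bigr)\tran\mathbf S\,\mathbf W(\mathbf y)_{\bm\alpha}
  = \bigl((\mathbf x)_{\bm\beta}\bigr)\tran\mathbf S\,(\mathbf y)_{\bm\beta}.
\end{equation*}

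Next I would transpose the hypothesis $(\mathbf x')_{\bm\beta}=\mathbf S\tran(\mathbf x)_{\bm\beta}$. Using $(\mathbf S\tran)\tran=\mathbf S$ together with the fact that transposition reverses products, this yields $\bigl((\mathbf x')_{\bm\beta}\bigr)\tran=\bigl((\mathbf x)_{\bm\beta}\bigr)\tran\mathbf S$. Substituting this into the displayed identity immediately gives
\begin{equation*}
  \langle\mathbf x,\mathbf y\rangle
  = \bigl((\mathbf x)_{\bm\beta}\bigr)\tran\mathbf S\,(\mathbf y)_{\bm\beta}
  = \bigl((\mathbf x')_{\bm\beta}\bigr)\tran(\mathbf y)_{\bm\beta},
\end{equation*}
which is the desired conclusion.

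There is no genuine obstacle here: the statement is essentially a bookkeeping consequence of Lemma~\ref{lemma:similarity-inner-product}, and the only point requiring care is the identification $(\mathbf x)_{\bm\beta}=\mathbf W(\mathbf x)_{\bm\alpha}$, i.e.\ tracking which basis each coordinate vector lives in and not conflating $\bm\alpha$- and $\bm\beta$-coordinates. I would also note that the precomputation $(\mathbf x')_{\bm\beta}=\mathbf S\tran(\mathbf x)_{\bm\beta}$ is what makes this useful in practice: once $\mathbf x'$ is stored, the inner product with any $\mathbf y$ collapses to an ordinary dot product in $\bm\beta$-coordinates, requiring neither the Cholesky factor $\mathbf E$ of Theorem~\ref{thm:similarity-orthonormalization} nor the assumption that $\mathbf S$ be positive-definite.
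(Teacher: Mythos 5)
Your proof is correct and follows essentially the same route as the paper's: transpose the hypothesis $(\mathbf{x}')_{\bm{\beta}}=\mathbf{S}\tran(\mathbf{x})_{\bm{\beta}}$ and apply Lemma~\ref{lemma:similarity-inner-product} rewritten in $\bm\beta$-coordinates via $(\mathbf x)_{\bm\beta}=\mathbf W(\mathbf x)_{\bm\alpha}$. The paper's version is just a one-line compression of exactly this argument.
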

{\linepenalty=500\par}

\begin{proof}
$\bigl((\mathbf{x}')_{\bm{\beta}}\bigr)\tran(\mathbf{y})_{\bm{\beta}}\!=\!\bigl((\mathbf x)_{\bm\beta}\bigr)\tran\mathbf S(\mathbf y)_{\bm\beta}\!=\!\langle\mathbf x,
\mathbf y\rangle$ from Lemma\,\ref{lemma:similarity-inner-product}\rlap.\!\!\!\looseness=-1
\end{proof}

\begin{remark}
By transforming a query vector $\mathbf x$ into $(\mathbf x')_{\bm\beta}$,
we can retrieve documents according to the inner product in vector databases
that only support nearest neighbor search according to the dot product.
Note that we do not introduce $\mathbf S$ into $(\mathbf y)_{\bm\beta}$, which
allows us to change $\mathbf S$ without changing the documents in a vector
database and that $\mathbf S$ can be arbitrary as discussed in the
remark for Lemma~\ref{lemma:similarity-inner-product}.
\end{remark}

\begin{theorem}
\label{theorem:similarity-partial}
Let $G=(\mathbb R^n, \mathbf W_{\bm{\alpha}}, \mathbf S_{\bm{\beta}})$
be a soft \abbr{VSM}.
Let $\mathbf{x,x',y,y',}$\penalty-500$\mathbf{z,z'}\in\mathbb R^n$ s.t.\ $
\mathbf{x},\mathbf{y},\mathbf{z}\not=0,
(\mathbf{x}')_{\bm{\beta}} = \mathbf{S}\tran(\mathbf{x})_{\bm{\beta}},
(\mathbf{y}')_{\bm{\beta}} = \frac{(\mathbf{y})_{\bm{\beta}}}{\sqrt{\bigl((\mathbf{y})_{\bm{\beta}}\bigr)\tran\mathbf{S}(\mathbf{y})_{\bm{\beta}}}},$
and\ \,$
(\mathbf{z}')_{\bm{\beta}} = \frac{(\mathbf{z})_{\bm{\beta}}}{\sqrt{\bigl((\mathbf{z})_{\bm{\beta}}\bigr)\tran\mathbf{S}(\mathbf{z})_{\bm{\beta}}}}.$
Then $
\langle\mathbf x/\Vert\mathbf x\Vert, \mathbf y/\Vert\mathbf y\Vert\rangle\leq\langle\mathbf x/\Vert\mathbf x\Vert,\mathbf z/\Vert\mathbf z\Vert\rangle
$ iff\ \,$
\bigl((\mathbf{x}')_{\bm{\beta}}\bigr)\tran(\mathbf{y}')_{\bm{\beta}}\leq
\bigl((\mathbf{x}')_{\bm{\beta}}\bigr)\tran(\mathbf{z}')_{\bm{\beta}}$.
\end{theorem}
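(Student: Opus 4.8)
The plan is to reduce both inequalities to the $\bm\beta$-coordinate form of the inner product from Lemma~\ref{lemma:similarity-inner-product} and then to exhibit a single positive scaling factor that relates the two sides, so that the biconditional follows from the fact that scaling by a positive constant preserves the direction of an inequality.

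First I would rewrite each soft cosine measure purely in $\bm\beta$-coordinates. Using the change-of-basis identity $\mathbf W(\mathbf x)_{\bm\alpha}=(\mathbf x)_{\bm\beta}$ together with Lemma~\ref{lemma:similarity-inner-product} and the \abbr{SCM} formula from the remark following it, the measure $\langle\mathbf x/\Vert\mathbf x\Vert,\mathbf y/\Vert\mathbf y\Vert\rangle$ equals $\bigl((\mathbf x)_{\bm\beta}\bigr)\tran\mathbf S(\mathbf y)_{\bm\beta}$ divided by $\sqrt{\bigl((\mathbf x)_{\bm\beta}\bigr)\tran\mathbf S(\mathbf x)_{\bm\beta}}\,\sqrt{\bigl((\mathbf y)_{\bm\beta}\bigr)\tran\mathbf S(\mathbf y)_{\bm\beta}}$, and analogously with $\mathbf z$ in place of $\mathbf y$.

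Next I would expand the right-hand dot products. From $(\mathbf x')_{\bm\beta}=\mathbf S\tran(\mathbf x)_{\bm\beta}$ and the transpose identity $\bigl(\mathbf S\tran(\mathbf x)_{\bm\beta}\bigr)\tran=\bigl((\mathbf x)_{\bm\beta}\bigr)\tran\mathbf S$, combined with the normalizations defining $\mathbf y'$ and $\mathbf z'$, I obtain $\bigl((\mathbf x')_{\bm\beta}\bigr)\tran(\mathbf y')_{\bm\beta}=\bigl((\mathbf x)_{\bm\beta}\bigr)\tran\mathbf S(\mathbf y)_{\bm\beta}\big/\sqrt{\bigl((\mathbf y)_{\bm\beta}\bigr)\tran\mathbf S(\mathbf y)_{\bm\beta}}$ and the corresponding expression for $\mathbf z'$. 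Comparing with the previous step, each soft cosine measure is exactly the matching dot product multiplied by the factor $1/\sqrt{\bigl((\mathbf x)_{\bm\beta}\bigr)\tran\mathbf S(\mathbf x)_{\bm\beta}}$, which is common to the $\mathbf y$-case and the $\mathbf z$-case because it depends only on $\mathbf x$. The transpose in the definition of $\mathbf x'$ is precisely what makes this cancellation go through when $\mathbf S$ is asymmetric.

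The equivalence then follows by multiplying the claimed dot-product inequality through by this shared factor. The one point requiring care, and the main obstacle, is that the factor must be strictly positive so that the inequality direction is preserved rather than reversed; this reduces to showing $\bigl((\mathbf x)_{\bm\beta}\bigr)\tran\mathbf S(\mathbf x)_{\bm\beta}>0$. I would justify this, along with the well-definedness of the square roots appearing in the first step, by invoking the positivity conditions discussed in the remark after Lemma~\ref{lemma:similarity-inner-product} (positive definiteness of $\mathbf S$, or non-negativity of $\mathbf W$ and $\mathbf S$ acting on the non-negative term-frequency coordinates), which ensure the relevant quadratic forms are positive for $\mathbf x,\mathbf y,\mathbf z\neq 0$.
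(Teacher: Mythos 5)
Your proposal is correct and follows essentially the same route as the paper's proof: expand $\bigl((\mathbf{x}')_{\bm{\beta}}\bigr)\tran(\mathbf{y}')_{\bm{\beta}}$ into $\bigl((\mathbf x)_{\bm\beta}\bigr)\tran\mathbf S(\mathbf y)_{\bm\beta}\big/\sqrt{\bigl((\mathbf y)_{\bm\beta}\bigr)\tran\mathbf S(\mathbf y)_{\bm\beta}}$, observe via Lemma~\ref{lemma:similarity-inner-product} that this differs from the soft cosine measure only by the factor $\sqrt{\bigl((\mathbf x)_{\bm\beta}\bigr)\tran\mathbf S(\mathbf x)_{\bm\beta}}$ depending solely on $\mathbf x$, and conclude that the ordering is preserved. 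Your added attention to the strict positivity of that factor (which the paper leaves implicit, though it follows from the positive definiteness of the Gramian $\mathbf S$ established in Theorem~\ref{thm:similarity-orthonormalization}) is a sound refinement rather than a departure.
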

{\linepenalty=500\par}

\begin{proof}
$\bigl((\mathbf{x}')_{\bm{\beta}}\bigr)\tran(\mathbf{y'})_{\bm{\beta}} = \frac{((\mathbf x)_{\bm\beta})\tran\mathbf S(\mathbf y)_{\bm\beta}}{\sqrt{((\mathbf y)_{\bm\beta})\tran\mathbf S(\mathbf y)_{\bm\beta}}}$. From
Lemma~\ref{lemma:similarity-inner-product}, this equals $\langle\mathbf
x/\Vert\mathbf x\Vert, \mathbf y/\Vert\mathbf y\Vert\rangle$ except for the missing
term $\sqrt{\bigl((\mathbf x)_{\bm\beta}\bigr)\tran\mathbf S(\mathbf x)_{\bm\beta}}$ in the
divisor. The term is constant in both $\langle\mathbf x/\Vert\mathbf
x\Vert,\mathbf y/\Vert\mathbf y\Vert\rangle$, and $\langle\mathbf x/\Vert\mathbf
x\Vert,\mathbf z/\Vert\mathbf z\Vert\rangle$, so ordering is preserved.
\end{proof}

\begin{remark}
By transforming a query vector $\mathbf x$ into $(\mathbf x')_{\bm\beta}$ and
document vectors $\mathbf y$ into $(\mathbf y')_{\bm\beta}$, we can retrieve
documents according to the \abbr{SCM} in vector databases that only support
nearest neighbor search according to the dot product.
\end{remark}

\begin{theorem}
Let $G=(\mathbb R^n, \mathbf W_{\bm{\alpha}}, \mathbf S_{\bm{\beta}})
$ be a soft \abbr{VSM} s.t.\ $\mathbf S_{\bm{\beta}}$ is non-negative.
Let $\mathbf{x,y,y',z,z'}\in\mathbb R^n,$ and
$\mathbf{x',y'',z''}\in\mathbb R^{n+1}$ s.t. $
\mathbf{x}\not=0,\mathbf{y},\mathbf{z}>0,
(\mathbf{x}')_{\bm{\beta}'} = \left[\frac{\mathbf S\tran(\mathbf x)_{\bm\beta}}{\sqrt{\bigl(\mathbf S\tran(\mathbf x)_{\bm\beta}\bigr)\tran\mathbf S\tran(\mathbf x)_{\bm\beta}}}\:\:\:\:0\right]\tran, (\mathbf{y}')_{\bm{\beta}} = \frac{(\mathbf y)_{\bm{\beta}}}{\sqrt{\bigl((\mathbf y)_{\bm{\beta}}\bigr)\tran\mathbf S(\mathbf y)_{\bm{\beta}}}},
\newline (\mathbf{y}'')_{\bm{\beta}'} = 
\left[\bigl((\mathbf y')_{\bm{\beta}}\bigr)\tran\:\:\sqrt{1-\bigl((\mathbf y')_{\bm{\beta}}\bigr)\tran(\mathbf y')_{\bm{\beta}}}\right]\tran,
(\mathbf{z}')_{\bm{\beta}} = \frac{(\mathbf z)_{\bm{\beta}}}{\sqrt{\bigl((\mathbf z)_{\bm{\beta}}\bigr)\tran\mathbf S(\mathbf z)_{\bm{\beta}}}},$ 
and 
$(\mathbf{z}'')_{\bm{\beta}'} = \left[\bigl((\mathbf z')_{\bm{\beta}}\bigr)\tran\:\:\sqrt{1-\bigl((\mathbf z')_{\bm{\beta}}\bigr)\tran(\mathbf z')_{\bm{\beta}}}\right]\tran,
$ 
where 
$\bm\beta' = \bm\beta\cup\{[0 \ldots 0\:1]\tran\in\mathbb R^{n+1}\}.$
Then 
$\langle\mathbf x/\Vert\mathbf x\Vert, \mathbf y/\Vert\mathbf y\Vert\rangle\leq\langle\mathbf x/\Vert\mathbf x\Vert, \mathbf z/\Vert\mathbf z\Vert\rangle$
iff \newline
$\frac{\bigl((\mathbf{x}')_{\bm{\beta}'}\bigr)\tran(\mathbf{y}'')_{\bm{\beta}'}}{\sqrt{\bigl((\mathbf{x}')_{\bm{\beta}'}\bigr)\tran(\mathbf{x}')_{\bm{\beta}'}}\sqrt{\bigl((\mathbf{y}'')_{\bm{\beta}'}\bigr)\tran(\mathbf{y}'')_{\bm{\beta}'}}}\leq
\frac{\bigl((\mathbf{x}')_{\bm{\beta}'}\bigr)\tran(\mathbf{z}'')_{\bm{\beta}'}}{\sqrt{\bigl((\mathbf{x}')_{\bm{\beta}'}\bigr)\tran(\mathbf{x}')_{\bm{\beta}'}}\sqrt{\bigl((\mathbf{z}'')_{\bm{\beta}'}\bigr)\tran(\mathbf{z}'')_{\bm{\beta}'}}}.$
\end{theorem}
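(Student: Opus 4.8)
The plan is to reduce \emph{both} inequalities in the claimed equivalence to the single scalar inequality
$\bigl(\mathbf S\tran(\mathbf x)_{\bm\beta}\bigr)\tran(\mathbf y')_{\bm\beta}\leq\bigl(\mathbf S\tran(\mathbf x)_{\bm\beta}\bigr)\tran(\mathbf z')_{\bm\beta}$,
and then conclude by transitivity of ``iff''. The left-hand inequality is about the soft cosine measure, the right-hand one about the ordinary Euclidean cosine in $\mathbb R^{n+1}$; the work is to strip each of them down to the same comparison by cancelling common positive factors.

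First I would treat the left-hand side. By Lemma~\ref{lemma:similarity-inner-product} the soft cosine measure $\langle\mathbf x/\Vert\mathbf x\Vert, \mathbf y/\Vert\mathbf y\Vert\rangle$ equals $\frac{((\mathbf x)_{\bm\beta})\tran\mathbf S(\mathbf y)_{\bm\beta}}{\sqrt{((\mathbf x)_{\bm\beta})\tran\mathbf S(\mathbf x)_{\bm\beta}}\sqrt{((\mathbf y)_{\bm\beta})\tran\mathbf S(\mathbf y)_{\bm\beta}}}$, and analogously for $\mathbf z$. Both comparands contain the common, positive factor $\sqrt{((\mathbf x)_{\bm\beta})\tran\mathbf S(\mathbf x)_{\bm\beta}}$ in the denominator, so it cancels from the inequality without changing its direction. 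Substituting the definitions of $(\mathbf y')_{\bm\beta}$ and $(\mathbf z')_{\bm\beta}$ (each being its $\bm\beta$-vector divided by the corresponding $\sqrt{(\cdot)\tran\mathbf S(\cdot)}$) and rewriting $((\mathbf x)_{\bm\beta})\tran\mathbf S(\cdot)=\bigl(\mathbf S\tran(\mathbf x)_{\bm\beta}\bigr)\tran(\cdot)$ turns the left-hand side into exactly the target inequality.

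Next I would simplify the right-hand side. The key observation is that $(\mathbf x')_{\bm\beta'}$, $(\mathbf y'')_{\bm\beta'}$, and $(\mathbf z'')_{\bm\beta'}$ are all unit vectors in the ordinary Euclidean norm: $(\mathbf x')_{\bm\beta'}$ is $\mathbf S\tran(\mathbf x)_{\bm\beta}$ Euclidean-normalized and padded with a zero, while the appended coordinate of $(\mathbf y'')_{\bm\beta'}$ is chosen precisely so that $((\mathbf y'')_{\bm\beta'})\tran(\mathbf y'')_{\bm\beta'}=1$, and likewise for $\mathbf z''$. Hence every square root in the two denominators on the right equals $1$, and since the last coordinate of $(\mathbf x')_{\bm\beta'}$ is zero, the dot product $((\mathbf x')_{\bm\beta'})\tran(\mathbf y'')_{\bm\beta'}$ collapses to $\frac{\bigl(\mathbf S\tran(\mathbf x)_{\bm\beta}\bigr)\tran(\mathbf y')_{\bm\beta}}{\Vert\mathbf S\tran(\mathbf x)_{\bm\beta}\Vert}$, and similarly for $\mathbf z''$. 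Cancelling the common positive factor $\Vert\mathbf S\tran(\mathbf x)_{\bm\beta}\Vert$ again leaves the target inequality, so the two sides are equivalent.

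The main obstacle is justifying that the appended coordinate $\sqrt{1-((\mathbf y')_{\bm\beta})\tran(\mathbf y')_{\bm\beta}}$ is real, i.e.\ that $((\mathbf y')_{\bm\beta})\tran(\mathbf y')_{\bm\beta}\leq 1$; this is exactly where the hypotheses $\mathbf S\geq 0$ and $\mathbf y,\mathbf z>0$ enter. Writing $\mathbf q=(\mathbf y)_{\bm\beta}$, the claim is $\mathbf q\tran\mathbf q\leq\mathbf q\tran\mathbf S\mathbf q$, i.e.\ $\mathbf q\tran(\mathbf S-\mathbf I)\mathbf q\geq 0$. Because $\bm\beta$ is normalized we have $s_{ii}=1$, so $\mathbf S-\mathbf I$ has vanishing diagonal and non-negative off-diagonal entries by hypothesis; as $\mathbf q$ is non-negative (being $\mathbf W(\mathbf y)_{\bm\alpha}$ with $\mathbf y>0$), every summand of $\mathbf q\tran(\mathbf S-\mathbf I)\mathbf q=\sum_{i\neq j}q_i s_{ij}q_j$ is non-negative and the bound follows; the same argument applies to $\mathbf z$. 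This argument also yields $\mathbf q\tran\mathbf S\mathbf q\geq\Vert\mathbf q\Vert^2>0$, which guarantees that $(\mathbf y')_{\bm\beta}$ (and $(\mathbf z')_{\bm\beta}$) is well-defined and that all the normalizing denominators cancelled above are strictly positive, closing the argument.
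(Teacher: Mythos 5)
Your proof is correct and takes essentially the same route as the paper's: establish that $(\mathbf x')_{\bm\beta'}$, $(\mathbf y'')_{\bm\beta'}$, and $(\mathbf z'')_{\bm\beta'}$ are Euclidean unit vectors (using non-negativity of $\mathbf S$ and positivity of $(\mathbf y)_{\bm\beta}$, $(\mathbf z)_{\bm\beta}$), so the right-hand cosines collapse to dot products that differ from the corresponding soft cosine measures only by positive factors depending on $\mathbf x$ alone, and the ordering is therefore preserved. Your explicit verification that $\mathbf q\tran\mathbf S\mathbf q\geq\mathbf q\tran\mathbf q$ via $s_{ii}=1$ and the non-negative off-diagonal terms is a nice elaboration of a step the paper merely asserts with a citation.
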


\begin{proof}
$\bigl((\mathbf x')_{\bm\beta'}\bigr)\tran(\mathbf x')_{\bm\beta'}=1$.
Since $\mathbf S$ is non-negative, and $(\mathbf{y})_{\bm{\beta}}>0$, $\sqrt{\bigl((\mathbf
y)_{\bm{\beta}}\bigr)\tran\mathbf S(\mathbf y)_{\bm{\beta}}}\geq\sqrt{\bigl((\mathbf
y)_{\bm{\beta}}\bigr)\tran(\mathbf y)_{\bm{\beta}}}$ and therefore $\bigl((\mathbf
y')_{\bm\beta'}\bigr)\tran(\mathbf
y')_{\bm\beta'}\leq1$, and $\bigl((\mathbf y'')_{\bm\beta'}\bigr)\tran(\mathbf
y'')_{\bm\beta'}=1$~\cite[sec.~4.2]{neyshabur2015symmetric}. Therefore:
\begin{multline*}
\frac{\bigl((\mathbf{x}')_{\bm{\beta}'}\bigr)\tran(\mathbf{y}'')_{\bm{\beta}'}}{\sqrt{\bigl((\mathbf{x}')_{\bm{\beta}'}\bigr)\tran(\mathbf{x}')_{\bm{\beta}'}}\sqrt{\bigl((\mathbf{y}'')_{\bm{\beta}'}\bigr)\tran(\mathbf{y}'')_{\bm{\beta}'}}}
= \bigl((\mathbf{x}')_{\bm{\beta}'}\bigr)\tran(\mathbf{y}'')_{\bm{\beta}'}
\\
= \frac{\bigl((\mathbf x)_{\bm\beta}\bigr)\tran\mathbf S(\mathbf y)_{\bm{\beta}}}{\sqrt{\bigl(\mathbf S\tran(\mathbf x)_{\bm\beta}\bigr)\tran\mathbf S\tran(\mathbf x)_{\bm\beta}}\sqrt{\bigl((\mathbf y)_{\bm{\beta}}\bigr)\tran\mathbf S(\mathbf y)_{\bm{\beta}}}}.
\end{multline*}
From Lemma~\ref{lemma:similarity-inner-product}, this equals $\langle\mathbf
x/\Vert\mathbf x\Vert, \mathbf y/\Vert\mathbf y\Vert\rangle$
except for the missing term
$\sqrt{\bigl(\mathbf{(x)}_{\bm\beta}\bigr)\tran\mathbf
S\mathbf{(x)}_{\bm\beta}}$, and the extra term $\sqrt{\bigl(\mathbf
S\tran(\mathbf x)_{\bm\beta}\bigr)\tran\mathbf S\tran(\mathbf
x)_{\bm\beta}}$ in the divisor. The terms are constant in both
$\langle\mathbf x/\Vert\mathbf x\Vert,\mathbf y/\Vert\mathbf y\Vert\rangle$,
and $\langle\mathbf x/\Vert\mathbf x\Vert,\mathbf z/\Vert\mathbf
z\Vert\rangle$, so ordering is preserved.
\end{proof}

\begin{remark}
By transforming a query vector $\mathbf x$ into $(\mathbf x')_{\bm\beta'}$ and
document vectors $\mathbf y$ into $(\mathbf y'')_{\bm\beta'}$, we can retrieve
documents according to the \abbr{SCM} in vector databases that only support
nearest neighbor search according to the cosine similarity.

Whereas most vector databases are designed for storing
low-dimensional and dense vector coordinates, document vectors have the dimension
$n$, which can be in the millions for real-world corpora such as the English
Wikipedia. Apart from that, a document contains only a small fraction of the terms
in the vocabulary, which makes the coordinates extremely sparse. Therefore, the
coordinates need to be converted to a dense low-dimensional representation,
using e.g.\ the latent semantic analysis (\abbr{LSA}), before they are stored
in a vector database or used for queries.
\end{remark}

Unlike vector databases, inverted-index-based search engines are built around a
data structure called the \term{inverted index},
which maps each term in our vocabulary to a list of documents (a
\term{posting}) containing the term. Documents in a posting are sorted by a
common criterion. The search engine tokenizes a text query into terms,
retrieves postings for the query terms, and then traverses the postings,
computing similarity between the query and the documents.\looseness=-1

We can directly replace the search engine's document similarity formula
with the formula for the inner product from
Lemma~\ref{lemma:similarity-inner-product}, or the formula for the \abbr{SCM}.
After this straightforward change, the system will still only retrieve
documents that have at least one term in common with the query. Therefore,
we first need to \term{expand} the query vector $\mathbf x$ by computing $((\mathbf
x)_{\bm\beta})\tran\mathbf S$ and retrieving postings for all terms
corresponding to the nonzero coordinates in the expanded vector. The expected
number of these terms is $\mathcal{O}(mC)$, where $m$ is the number of non-zero
elements in $(\mathbf x)_{\bm\alpha}$, and $C$~is the maximum number of
non-zero elements in any column of~$\mathbf S$.  Assuming $m$ and $C$ are
bounded by a constant, $\mathcal{O}(mC)=\mathcal{O}(1)$.

\section{Conclusion and future work}
\label{sec:similarity-conclusion}
In this paper, we examined the soft vector space model (\abbr{VSM})
of \newcite{sidorov2014soft}. We restated the definition, we proved a tighter
lower time complexity bound of $\mathcal{O}(n^3)$ for a related
orthonormalization problem, and we showed how the inner product, and the soft
cosine measure between document vectors can be efficiently computed in
general-purpose vector databases, in the inverted indices of text search
engines, and in other applications. To complement this paper, we also provided
an implementation of the \abbr{SCM} to \ifreview\else Gensim\footnote{%
  See \url{https://github.com/RaRe-Technologies/gensim/}, pull requests 1827,
  and 2016.%
}~\cite{rehurek_lrec}, \fi a free open-source natural language processing
library.\looseness=-1

In our remarks for Theorem~\ref{thm:similarity-inner-product}, we discuss
strategies for making no column of matrix~$\mathbf{S}$ contain more
than $C$ non-zero elements. Future research will evaluate their performance on
the semantic text similarity task with public datasets. Various choices of
the matrix $\mathbf{S}$ based on word embeddings, Levenshtein distance,
thesauri, and statistical regression as well as metric matrices from previous
work~\cite{mikawa2011proposal} will also be evaluated both amongst themselves
and against other document similarity measures such as the \abbr{LDA}, \abbr{LSA},
and \abbr{WMD}.

\ifreview\else
\subsubsection*{Acknowledgements}
We gratefully acknowledge the support by \abbr{TAČR} under the Omega program,
project \abbr{TD03000295}. We also sincerely thank three anonymous reviewers
for their insightful comments.
\fi

\printbibliography

\end{document}